\let\O@argtabularcr\@argtabularcr
\def\O@xtabularcr{\@ifnextchar[\O@argtabularcr{\ifnum 0=`{\fi}\cr}}
\let\O@tabacol\@tabacol
\let\O@tabclassiv\@tabclassiv
\let\O@tabclassz\@tabclassz
\let\O@tabarray\@tabarray
\def\author@tabular{\authorsize\def\@halignto{}\@authortable}
\let\endauthor@tabular=\endtabular
\def\author@tabcrone{{\ifnum0=`}\fi\O@xtabularcr\affilsize\itshape
 \let\\=\author@tabcrtwo\ignorespaces}
\def\author@tabcrtwo{{\ifnum0=`}\fi\O@xtabularcr[-3\p@]\affilsize\itshape
 \let\\=\author@tabcrtwo\ignorespaces}
\def\@authortable{\leavevmode \hbox \bgroup $\let\@acol\O@tabacol
 \let\@classz\O@tabclassz \let\@classiv\O@tabclassiv
 \let\\=\author@tabcrone \ignorespaces \O@tabarray}
\newcommand{\?}{\stackrel{?}{=}}
\def\cC{\mathcal{C}}
\def\cD{\mathcal{D}}
\long\def\comment#1{}
\newcommand{\caE}{\ensuremath{\mathcal E}}
\newcommand{\caT}{\ensuremath{\mathcal T}}
\newcommand{\caX}{\ensuremath{\mathcal X}}
\newcommand{\sort}[1]{\ensuremath{\mathsf{#1}}}
\newcommand{\Pos}{{\mathit P}os}
\newcommand{\Variables}{\caX}
\newcommand{\Symbols}{\Sigma}
\newcommand{\DSymbols}{\cD}
\newcommand{\CSymbols}{\cC}
\newcommand{\TermsOn}[5]{{\caT^{#4}_{#1}(#2)}_{#3}^{#5}}
\newcommand{\Terms}{\TermsOn{\Symbols}{\Variables}{}{}{}}
\newcommand{\TermsS}[1]{\TermsOn{\Symbols}{\Variables}{\sort{#1}}{}{}}
\newcommand{\GTermsOn}[2]{\caT^{#2}_{#1}}
\newcommand{\GTerms}{\GTermsOn{\Symbols}{}}
\newcommand{\GTermsS}[1]{\GTermsOn{\Symbols,\sort{#1}}{}}
\newcommand{\SubstOn}[2]{{\cal S}ubst(#1,#2)}
\newcommand{\Substs}{\SubstOn{\Symbols}{\Variables}{}{}{}}
\newcommand{\idsubst}{\textit{id}}
\newcommand{\composeSubst}{}
\newcommand{\composeRel}{;}
\newcommand{\compose}{\composeSubst}
\newcommand{\restrict}[2]{#1|_{#2}}
\newcommand{\congr}[1]{=_{#1}}
\newcommand{\csuV}[3]{\textit{CSU\/}^{#2}_{#3}({#1})}
\newcommand{\var}[1]{\mathit{Var}(#1)}
\newcommand{\occ}[1]{\mathit{Pos}(#1)}
\newcommand{\occSub}[2]{\mathit{Pos}_{#2}(#1)}
\newcommand{\funocc}[1]{\mathit{Pos}_{\Symbols}(#1)}
\newcommand{\subterm}[2]{#1|_{#2}}
\newcommand{\replace}[3]{#1[#3]_{#2}}
\newcommand{\domain}[1]{\mathit{Dom}(#1)}
\newcommand{\range}[1]{\intrvar{#1}}
\newcommand{\intrvar}[1]{\mathit{Ran}(#1)}
\newcommand{\rootpos}{\Lambda}
\newcommand{\rewrite}[1]{\rightarrow_{#1}}
\newcommand{\rewrites}[1]{\rightarrow^*_{#1}}
\newcommand{\narrow}[2]{\mathop{\stackrel{#1}{\rightsquigarrow}_{#2}}}
\newcommand{\norm}[1]{{\downarrow_ {#1}}}
\newcommand{\sem}[1]{{[\![#1]\!]}_{E,B}}
\newenvironment{flemma-noname}[2][]{\vskip\topsep\noindent{\bf
Lemma #2\ifthenelse{\equal{#1}{}}{}{\ }#1.}\em\ }{\vskip\topsep}
\newcommand{\ignore}[1]{}
\newtheorem{example}{Example}
\newtheorem{definition}{Definition}
\newtheorem{lemma}[definition]{Lemma}
\newtheorem{proposition}[definition]{Proposition}
\newtheorem{corollary}[definition]{Corollary}
\title{Variant-based Equational Unification under\\ Constructor Symbols\thanks{This work has been partially supported by the EU (FEDER) and the Spanish MCIU under grant RTI2018-094403-B-C32, by the Spanish Generalitat Valenciana under grants PROMETEO/2019/098 and APOSTD/2019/127, and by the US Air Force Office of Scientific Research under award number FA9550-17-1-0286.
}}
\author{
Dami\'an Aparicio-S\'anchez \qquad Santiago Escobar \qquad Julia Sapi\~{n}a
\institute{VRAIN (Valencian Research Institute for Artificial Intelligence)\\Universitat Polit\`ecnica de Val\`encia\\
Valencia, Spain}
\email{\{daapsnc,sescobar,jsapina\}@upv.es}
}
\begin{document}
\maketitle

\begin{abstract}
Equational unification of two terms consists of finding a substitution that, when applied to both terms, makes them equal modulo some equational properties. A narrowing-based equational unification algorithm relying on the concept of the \emph{variants} of a term is available in the most recent version of Maude, version 3.0, which provides quite sophisticated unification features. A variant of a term t is a pair consisting of a substitution $\sigma$ and the canonical form of $t\sigma$. Variant-based unification is decidable when the equational theory satisfies the \emph{finite variant property}. However, this unification procedure does not take into account constructor symbols and, thus, may compute many more unifiers than the necessary or may not be able to stop immediately. In this paper, we integrate the notion of constructor symbol into the variant-based unification algorithm. Our experiments on  positive and negative unification problems show an impressive speedup.
\end{abstract}

\section{Introduction}\label{sec:intro}

Equational unification of two terms is of special relevance to many areas in computer science, including logic programming, and consists of finding a substitution that, when applied to both terms, makes them equal modulo some equational properties. Several algorithms have been developed in the literature for specific equational theories, such as associative-commutative symbols, exclusive-or, Diffie-Hellman, or Abelian Groups (see~\cite{BS-handbook00}). Narrowing was proved to be complete for unification \cite{JKK83} and several cases have been studied where narrowing provides a decidable unification algorithm~\cite{AEI09,AEI11}. A narrowing-based equational unification algorithm relying on the concept of the \emph{variants} of a term \cite{CD05} has been developed in \cite{ESM12} and it is available in the most recent version of Maude, version 3.0, which provides quite sophisticated unification features \cite{maude-manual,DEEM+20}.

Several tools and techniques rely on Maude's advanced unification capabilities, such as termination~\cite{DLM09} and local confluence 
and coherence~\cite{DM12,DMR20} proofs, 
narrowing-based theorem proving~\cite{Rusu10} or testing~\cite{Riesco14}, and \emph{logical model checking} \cite{EM07,BEM13}. The area of cryptographic protocol analysis has also benefited from advanced unification algorithms: 
Maude-NPA~\cite{EMM09},
Tamarin~\cite{DHRS18} 
and AKISS~\cite{BDGK17}
rely on the different unification features of Maude.
Furthermore, numerous decision procedures for formula satisfiability modulo equational theories also rely on unification, either based on narrowing \cite{TGRK15} or by using variant generation in finite variant theories~\cite{Meseguer18-scp}.

Constructor symbols are extensively used in computer science: for representing data instead of functions, for manipulating programs as data, or for reasoning in complex semantic structures. 
In an equational theory, constructors can be characterized in the ``no junk, no confusion" style of Goguen and Burstall \cite{BG82},
providing the mathematical semantics of the equational theory as the \emph{initial algebra} of a Maude functional module, 
which corresponds to the least Herbrand model in logic programming
(see \cite{maude-manual}).
However, this more general notion of constructor differs from 
the ``logic" notion of a functor 
and 
the ``functional" 
notion of a symbol not appearing in the root position of the left-hand side of any equation. 
The notion of a constructor symbol has not yet been integrated into
the variant-based equational unification procedure of Maude and, thus, it may compute many more unifiers than the necessary or it may not be able to stop immediately. In this paper, we integrate the notion of constructor symbol into the variant-based unification algorithm
with an impressive speedup.

After some preliminaries in Section~\ref{sec:preliminaries}, we recall variant-based unification in Section~\ref{sec:eq_unification}. 
In Section~\ref{sec:mgvu-constructor-root}, we define our new unification algorithm that reduces the total execution time.
Our experiments in Section~\ref{sec:exp} show that this improved unification algorithm works well in practice. 
We conclude in Section~\ref{sec:conc}.

\section{Preliminaries}\label{sec:preliminaries}

We follow the classical notation and terminology from \cite{Terese03} for term rewriting, from \cite{BS-handbook00} for unification, and from \cite{Meseguer92} for rewriting logic and order-sorted notions.

We assume an order-sorted signature $\mathsf{\Sigma} = (S, \leq, \Sigma)$ with a poset of sorts $(S, \leq)$. The poset $(\sort{S},\leq)$ of sorts for $\Symbols$ is partitioned into equivalence classes, called \emph{connected components}, by the equivalence relation $(\leq \cup \geq)^+$. We assume that each connected component $[\sort{s}]$ has a \emph{top element} under $\leq$, denoted $\top_{[\sort{s}]}$ and called the \emph{top sort} of $[\sort{s}]$. This involves no real loss of generality, since if $[\sort{s}]$ lacks a top sort, it can be easily added. We also assume an $\sort{S}$-sorted family $\Variables=\{\Variables_\sort{s}\}_{\sort{s} \in \sort{S}}$ of disjoint variable sets with each $\Variables_\sort{s}$ countably infinite. $\TermsS{s}$ is the set of terms of sort \sort{s}, and $\GTermsS{s}$ is the set of ground terms of sort \sort{s}. We write $\Terms$ and $\GTerms$ for the corresponding order-sorted term algebras. Given a term $t$, $\var{t}$ denotes the set of variables in $t$.

Positions
are represented by sequences of natural numbers denoting an access path in
the term when viewed as a tree.  The top or root position is denoted by the empty sequence $\rootpos$.
We define the relation $p \leq q$ between positions as
$p \leq p $ for any $p$;
and
$p \leq p.q$ for any $p$ and $q$.
Given $U
\subseteq \Symbols\cup\Variables$, $\occSub{t}{U}$ denotes the set of positions of
a term $t$ that are rooted by symbols or variables in $U$.
The set of positions of a term $t$ is written $\occ{t}$,
and
the set of non-variable positions $\funocc{t}$.
The subterm of $t$
at position $p$
is $\subterm{t}{p}$ and $\replace{t}{p}{u}$ is
the term $t$ where $\subterm{t}{p}$ is
replaced by $u$.

A \textit{substitution} $\sigma\in\Substs$ is a sorted mapping from a finite subset of $\Variables$ to $\Terms$. Substitutions are written as $\sigma=\{X_1 \mapsto t_1,\ldots,X_n \mapsto t_n\}$ where the domain of $\sigma$ is $\domain{\sigma}=\{X_1,\ldots,X_n\}$ and the set of variables introduced by terms $t_1,\ldots,t_n$ is written $\range{\sigma}$. The identity substitution is $\idsubst$. Substitutions are homomorphically extended to $\Terms$. The application of a substitution $\sigma$ to a term $t$ is denoted by $t\sigma$ or $\sigma(t)$. For simplicity, we assume that every substitution is idempotent, i.e., $\sigma$ satisfies $\domain{\sigma}\cap\range{\sigma}=\emptyset$. The restriction of $\sigma$ to a set of variables $V$ is $\subterm{\sigma}{V}$, i.e., $\forall x\in V$, $\subterm{\sigma}{V}(x)=\sigma(x)$ and $\forall x\not\in V$, $\subterm{\sigma}{V}(x)=x$. Composition of two substitutions $\sigma$ and $\sigma'$ is denoted by $\sigma\compose\sigma'$. Combination of two substitutions $\sigma$ and $\sigma'$ such that $\domain{\sigma}\cap\domain{\sigma'}=\emptyset$ is denoted by $\sigma \cup \sigma'$. We call a substitution $\sigma$ a variable \emph{renaming} if there is another substitution $\sigma^{-1}$ such that $(\sigma\sigma^{-1})|_{Dom(\sigma)} = \idsubst$.

A \textit{$\Symbols$-equation} is an unoriented pair $t = t'$, where $t,t' \in \TermsS{s}$ for some sort $\sort{s}\in\sort{S}$. An \emph{equational theory} $(\Symbols,E)$ is a pair with $\Symbols$ an order-sorted signature and $E$ a set of $\Symbols$-equations. Given $\Symbols$ and a set $E$ of $\Symbols$-equations, order-sorted equational logic induces a congruence relation $\congr{E}$ on terms $t,t' \in \Terms$ (see~\cite{Meseguer97}). 
We say $\sigma_1 \congr{E} \sigma_2$ iff $\sigma_1(x) \congr{E} \sigma_2(x)$ for any variable $x$.
Throughout this paper we assume that $\GTermsS{s}\neq\emptyset$ for every sort \sort{s}, because this affords a simpler deduction system. An equational theory $(\Symbols,E)$ is \emph{regular} if for each $t = t'$ in $E$, we have $\var{t} = \var{t'}$. An equational theory $(\Symbols,E)$ is \emph{linear} if for each $t = t'$ in $E$, each variable occurs only once in $t$ and in $t'$. An equational theory $(\Symbols,E)$ is \textit{sort-preserving} if for each $t = t'$ in $E$, each sort \sort{s}, and each substitution $\sigma$, we have $t \sigma \in \TermsS{s}$ iff $t' \sigma \in \TermsS{s}$. An equational theory $(\Symbols,E)$ is \emph{defined using top sorts} if for each equation $t = t'$ in $E$, all variables in $\var{t}$ and $\var{t'}$ have a top sort. Given two terms $t$ and $t'$, we say $t$ is more general than $t'$, denoted as $t \sqsupseteq_{E} t'$, if there is a substitution $\eta$ such that $t\eta \congr{E} t'$. Similarly, given two substitutions $\sigma$ and $\rho$, we say $\sigma$ is more general than $\rho$ for a set $W$ of variables, denoted as $\subterm{\sigma}{W} \sqsupseteq_{E} \subterm{\rho}{W}$, if there is a substitution $\eta$ such that $\subterm{(\sigma\compose\eta)}{W} \congr{E} \subterm{\rho}{W}$. The $\sqsupseteq_{E}$ relation induces an equivalence relation $\simeq_{E}$, i.e., $t \simeq_{E} t'$ iff $t \sqsupseteq_{E} t'$ and $t \sqsubseteq_{E} t'$.

An \textit{$E$-unifier} for a $\Symbols$-equation $t = t'$ is a substitution $\sigma$ such that $t\sigma \congr{E} t'\sigma$. For $\var{t}\cup\var{t'} \subseteq W$, a set of substitutions $\csuV{t = t'}{W}{E}$ is said to be a \textit{complete} set of unifiers for the equality $t = t'$ modulo $E$ away from $W$ iff: (i) each $\sigma \in \csuV{t = t'}{W}{E}$ is an $E$-unifier of $t = t'$; (ii) for any $E$-unifier $\rho$ of $t = t'$ there is a $\sigma \in \csuV{t=t'}{W}{E}$ such that $\subterm{\sigma}{W} \sqsupseteq_{E} \subterm{\rho}{W}$; and (iii) for all $\sigma \in \csuV{t=t'}{W}{E}$, $\domain{\sigma} \subseteq (\var{t}\cup\var{t'})$ and $\range{\sigma} \cap W = \emptyset$. Given a conjunction $\Gamma$ of equations, a set $U$ of $E$-unifiers of $\Gamma$ is said to be \textit{minimal} if it is complete and for all distinct elements $\sigma$ and $\sigma'$ in $U$, $\sigma \sqsupseteq_E \sigma'$ implies $\sigma \congr{E} \sigma'$. A unification algorithm is said to be \textit{finitary} and complete if it always terminates after generating a finite and complete set of unifiers. A unification algorithm is said to be \textit{minimal} and complete if it always returns a minimal and complete set of unifiers.

A \textit{rewrite rule} is an oriented pair $l \to r$, where $l \not\in \Variables$ and $l,r \in \TermsS{s}$ for some sort $\sort{s}\in\sort{S}$. An \textit{(unconditional) order-sorted rewrite theory} is a triple $(\Symbols,E,R)$ with $\Symbols$ an order-sorted signature, $E$ a set of $\Symbols$-equations, and $R$ a set of rewrite rules. The set $R$ of rules is \textit{sort-decreasing} if for each $t \rightarrow t'$ in $R$, each $\sort{s} \in \sort{S}$, and each substitution $\sigma$, $t'\sigma \in \TermsS{s}$ implies $t\sigma \in \TermsS{s}$. The rewriting relation on $\Terms$, written $t \rewrite{R} t'$ 
holds between $t$ and $t'$ iff there exist $p \in \funocc{t}$, $l \to r\in R$ and a substitution $\sigma$, such that $\subterm{t}{p} = l\sigma$, and $t' = \replace{t}{p}{r\sigma}$. The relation $\rewrite{R/E}$ on $\Terms$ is ${\congr{E} \composeRel\rewrite{R}\composeRel\congr{E}}$. The transitive (resp. transitive and reflexive) closure of $\rewrite{R/E}$ is denoted $\rewrite{R/E}^+$ (resp. $\rewrites{R/E}$). 

Reducibility of $\rewrite{R/E}$ is undecidable in general since $E$-congruence classes can be arbitrarily large. Therefore, $R/E$-rewriting is usually implemented by $R,E$-rewriting under some conditions on $R$ and $E$ such as confluence, termination, and coherence (see~\cite{JK86,Meseguer17,Meseguer20}). A relation $\rewrite{R,E}$ on $\Terms$ is defined as: $t \rewrite{R,E} t'$ 
iff there is a non-variable position $p \in \funocc{t}$, a rule $l \to r$ in $R$, and a substitution $\sigma$ such that $\subterm{t}{p} \congr{E} l\sigma$ and $t' = \replace{t}{p}{r\sigma}$. The narrowing relation $\narrow{}{R,E}$ on $\Terms$ is defined as: $t \narrow{\sigma}{R,E} t'$ 
iff there is a non-variable position $p \in \funocc{t}$, a rule $l \to r$ in $R$, and a substitution $\sigma$ such that $\subterm{t}{p}\sigma \congr{E} l\sigma$ and $t' = (\replace{t}{p}{r})\sigma$.
We call $(\Symbols,B,E)$ a \emph{decomposition} of an order-sorted equational theory ${(\Symbols,E\uplus B)}$ if $B$ is regular, linear, sort-preserving, defined using top sorts, and has a finitary and complete unification algorithm, 
and equations $E$ are oriented into rules $\overrightarrow{E}$ such that they are sort-decreasing and \emph{convergent}, i.e., confluent, terminating, and strictly coherent modulo $B$  \cite{DM12,LM16,Meseguer17}. The irreducible version of a term $t$ is denoted by $t\norm{E,B}$.

Given a decomposition $(\Symbols,B,E)$ of an equational theory and a term $t$, a pair $(t',\theta)$ of a term $t'$ and a substitution $\theta$ is an $E,B$-\emph{variant} (or just a variant) of $t$ if $t\theta\norm{E,B} \congr{B} t'$ and $\theta\norm{E,B} \congr{B} \theta$ ~\cite{CD05,ESM12}. A \emph{complete set of $E,B$-variants}~\cite{ESM12} (up to renaming) of a term $t$ is a subset, denoted by $\sem{t}$, of the set of all $E,B$-variants of $t$ such that, for each $E,B$-variant $(t',\sigma)$ of $t$, there is an $E,B$-variant $(t'', \theta) \in \sem{t}$ such that $(t'',\theta) \sqsupseteq_{E,B} (t',\sigma)$, i.e., there is a substitution $\rho$ such that $t' \congr{E} t''\rho$ and $\restrict{\sigma}{\var{t}} =_{E} \restrict{(\theta\rho)}{\var{t}}$. A decomposition $(\Symbols,B,E)$ has the \emph{finite variant property} (FVP)~\cite{ESM12} (also called a \emph{finite variant decomposition}) iff for each $\Symbols$-term $t$, there exists a complete and finite set $\sem{t}$ of variants of $t$. Note that whether a decomposition has the finite variant property is undecidable~\cite{BGLN13}, but a technique based on the dependency pair framework has been developed in \cite{ESM12} and a semi-decision procedure that works well in practice is available in~\cite{CME14tr}.

\section{Variant-based Equational Unification in Maude 3.0}\label{sec:eq_unification}

Rewriting logic \cite{Meseguer92} is  a flexible semantic framework within which  different concurrent systems can be naturally specified 
(see \cite{Meseguer12}). Rewriting Logic is efficiently implemented in the high-performance system Maude~\cite{maude-manual}, which has itself a formal environment of verification tools thanks to its reflective capabilities (see \cite{Maude07,Meseguer12}).

Maude 3.0 offers quite sophisticated symbolic capabilities (see \cite{Meseguer18-wollic} and references therein). 
Among these symbolic features, equational unification \cite{maude-manual} is a twofold achievement.
On the one hand, Maude provides an order-sorted equational unification command for any combination of symbols having any combination of 
associativity, commutativity, and identity \cite{DEEM+20}. This is remarkable, since there is no other system with such an advanced unification algorithm.
On the other hand, a narrowing-based equational unification algorithm relying on the concept of the \emph{variants} \cite{CD05} of a term is also available. A variant of a term $t$ is a pair consisting of a substitution $\sigma$ and the canonical form of $t\sigma$. Narrowing was proved to be complete for unification in \cite{JKK83}, but variant-based unification is decidable when the equational theory satisfies the \emph{finite variant property} \cite{CD05,ESM12}.
The finite variant property has become an essential  property in some research areas, such as cryptographic protocol analysis,
where 
Maude-NPA~\cite{EMM09},
Tamarin~\cite{DHRS18} 
and AKISS~\cite{BDGK17}
rely on the different unification features of Maude.

Let us make explicit the relation between variants and equational unification. First, we define the intersection of two sets of variants. Without loss of generality, we assume in this paper that each variant pair $(t',\sigma)$ of a term $t$ uses new freshly generated variables.

\begin{definition}[Variant Intersection]{\rm\cite{ESM12}}\label{def:cap}
Given a decomposition $(\Symbols,B,E)$ of an equational theory, two $\Symbols$-terms $t_1$ and $t_2$ such that $W_\cap = \var{t_1}\cap\var{t_2}$ and $W_\cup = \var{t_1}\cup\var{t_2}$, and two sets $V_1$ and $V_2$ of variants of $t_1$ and $t_2$, respectively, we define 
$V_1 \cap V_2 = \{(u_1\sigma,\theta_1\sigma \cup \theta_2\sigma \cup \sigma) \mid (u_1,\theta_1) \in V_1 \wedge (u_2,\theta_2) \in V_2 \wedge 
\exists \sigma: \sigma \in \csuV{u_1 = u_2}{W_\cup}{B} 
\wedge 
\restrict{(\theta_1\sigma)}{W_\cap} 
\congr{B} 
\restrict{(\theta_2\sigma)}{W_\cap}
\}
$.
\end{definition}

Then, we define variant-based unification as the computation of the variants of the two terms in a unification problem and their intersection.

\begin{corollary}[Finitary $\caE$-unification]{\rm\cite{ESM12}}
\label{cor:finitary-unification}
Let $(\Symbols,B,E)$ be a finite variant decomposition of an equational theory. Given two terms $t,t'$, the set 
$\csuV{t = t'}{\cap}{E\cup B}=\{\theta \mid (w,\theta)\in\sem{t}\cap\sem{t'}\}$ is a \emph{finite and complete} set of unifiers for $t = t'$.
\end{corollary}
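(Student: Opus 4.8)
The plan is to read the statement off Definition~\ref{def:cap} together with the finite variant property, by discharging the three conditions that make $\{\theta \mid (w,\theta)\in\sem{t}\cap\sem{t'}\}$ a finite and complete set of $(E\cup B)$-unifiers of $t=t'$: \emph{finiteness}; \emph{soundness}, that each such $\theta$ is an $(E\cup B)$-unifier of $t=t'$; and \emph{completeness}, that every $(E\cup B)$-unifier of $t=t'$ is subsumed on $W_\cup=\var{t}\cup\var{t'}$ by some such $\theta$ (here $W_\cup$ and $W_\cap=\var{t}\cap\var{t'}$ are as in Definition~\ref{def:cap}). I use freely that $(\Symbols,B,E)$ is a decomposition, so that $t\norm{E,B}$ is defined and unique up to $\congr{B}$, that every rewrite step modulo $B$ is an $(E\cup B)$-equality (hence $v\congr{E\cup B}v\norm{E,B}$), and the standard normalization property $(v\gamma)\norm{E,B}\congr{B}(v(\gamma\norm{E,B}))\norm{E,B}$. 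Finiteness I would dispatch first: $\sem{t}$ and $\sem{t'}$ are finite by FVP, and for each $(u_1,\theta_1)\in\sem{t}$ and $(u_2,\theta_2)\in\sem{t'}$ the set $\csuV{u_1=u_2}{W_\cup}{B}$ is finite because $B$ has a finitary complete unification algorithm, so the set in Definition~\ref{def:cap}, and its projection onto second components, is a finite union of finite sets.

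\emph{Soundness.} Let $\theta=\theta_1\sigma\cup\theta_2\sigma\cup\sigma$ with $(u_1,\theta_1)\in\sem{t}$, $(u_2,\theta_2)\in\sem{t'}$, $\sigma\in\csuV{u_1=u_2}{W_\cup}{B}$ and $\restrict{(\theta_1\sigma)}{W_\cap}\congr{B}\restrict{(\theta_2\sigma)}{W_\cap}$. Because $(u_1,\theta_1)$ is a variant of $t$, $t\theta_1\congr{E\cup B}u_1$, hence $t\theta_1\sigma\congr{E\cup B}u_1\sigma$; symmetrically $t'\theta_2\sigma\congr{E\cup B}u_2\sigma$; and $u_1\sigma\congr{B}u_2\sigma$ since $\sigma$ is a $B$-unifier of $u_1=u_2$. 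Moreover $\theta$ agrees modulo $B$ with $\theta_1\sigma$ on $\var{t}$ — off $W_\cap$ by the definition of the union, on $W_\cap$ by the side condition — and with $\theta_2\sigma$ on $\var{t'}$. Chaining, $t\theta\congr{E\cup B}u_1\sigma\congr{B}u_2\sigma\congr{E\cup B}t'\theta$, so $\theta$ is an $(E\cup B)$-unifier of $t=t'$; restricting $\theta$ to $W_\cup$ and renaming away the auxiliary variables supplies the domain/range conditions in the definition of a complete set of unifiers.

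\emph{Completeness.} Let $\rho$ be an $(E\cup B)$-unifier of $t=t'$, with $\domain{\rho}\subseteq W_\cup$. Put $\rho'=\rho\norm{E,B}$ and $s=(t\rho)\norm{E,B}$; since $t\rho\congr{E\cup B}t'\rho$ and canonical forms are unique up to $\congr{B}$, $s\congr{B}(t'\rho)\norm{E,B}$ as well. By the normalization property, $(s,\restrict{\rho'}{\var{t}})$ is an $E,B$-variant of $t$ and $(s,\restrict{\rho'}{\var{t'}})$ one of $t'$. Completeness of $\sem{t}$ and $\sem{t'}$ then yields $(u_1,\theta_1)\in\sem{t}$, $(u_2,\theta_2)\in\sem{t'}$ and matchers $\tau_1,\tau_2$ with $s\congr{E\cup B}u_1\tau_1$, $s\congr{E\cup B}u_2\tau_2$, and $\restrict{\rho}{\var{t}}\congr{E\cup B}\restrict{(\theta_1\tau_1)}{\var{t}}$, $\restrict{\rho}{\var{t'}}\congr{E\cup B}\restrict{(\theta_2\tau_2)}{\var{t'}}$. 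Since variant pairs carry fresh variables, $\tau_1$ and $\tau_2$ can be combined into a single $\tau$, and then $u_1\tau\congr{E\cup B}s\congr{E\cup B}u_2\tau$. The delicate step — where completeness of the variant computation really bites — is to upgrade $\tau$ to a \emph{$B$-}unifier of $u_1=u_2$: the witnessing matchers $\tau_i$ can be taken so that $u_i\tau_i$ is already $E,B$-irreducible, i.e. $s\congr{B}u_i\tau_i$, whence $u_1\tau\congr{B}u_2\tau$. Then $\tau$ factors, modulo $B$ and up to a substitution away from $W_\cup$, through some $\sigma\in\csuV{u_1=u_2}{W_\cup}{B}$. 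It remains to check that this $\sigma$ meets the side condition of Definition~\ref{def:cap} — which follows from $\restrict{\rho}{W_\cap}\congr{E\cup B}\restrict{(\theta_1\tau_1)}{W_\cap}\congr{E\cup B}\restrict{(\theta_2\tau_2)}{W_\cap}$ and the factorization — so that $\theta=\theta_1\sigma\cup\theta_2\sigma\cup\sigma\in\sem{t}\cap\sem{t'}$, and that $\restrict{\theta}{W_\cup}\sqsupseteq_{E\cup B}\restrict{\rho}{W_\cup}$, the witnessing multiplier being the factorization substitution composed with the matchers restricted to the auxiliary variables.

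I expect the one genuine obstacle to be exactly that upgrade: showing that a complete set of variants subsumes the variant $(s,\restrict{\rho'}{\var{t}})$ via a matcher $\tau_1$ for which $u_1\tau_1$ is $E,B$-irreducible. This is what justifies using $\csuV{\cdot}{\cdot}{B}$ rather than $\csuV{\cdot}{\cdot}{E\cup B}$ in Definition~\ref{def:cap}: once the variant computations of $t$ and of $t'$ have exhausted all $E$-reductions, reconciling the residual canonical forms $u_1$ and $u_2$ requires only $B$. Everything else is bookkeeping — keeping disjoint the variables of $t$ and $t'$, the fresh variables produced along the two variant computations, and the fresh variables produced by $B$-unification, so that the unions $\theta_1\cup\theta_2$ and $\tau_1\cup\tau_2$ are well defined and the ``away from $W_\cup$'' discipline of $\csuV{\cdot}{\cdot}{B}$ is respected — all automatic under the variable conventions fixed just before Definition~\ref{def:cap}.
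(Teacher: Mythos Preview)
The paper does not prove this corollary at all: it is quoted verbatim from \cite{ESM12} as background material, with no accompanying argument. So there is no ``paper's own proof'' to compare your proposal against.

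That said, your proposal is the standard argument and is essentially correct. Finiteness and soundness are routine, as you note. For completeness, you have correctly located the only nontrivial step: upgrading the $(E\cup B)$-unifier $\tau$ of $u_1$ and $u_2$ to a $B$-unifier. Your justification---that the subsuming variants $(u_i,\theta_i)$ can be chosen so that $u_i\tau_i$ is already $E,B$-irreducible---is exactly right, and is what the variant subsumption relation $\sqsupseteq_{E,B}$ in \cite{ESM12} is designed to deliver (the matcher $\rho$ in that definition satisfies $t'\congr{B}t''\rho$ with $t'$ irreducible, not merely $\congr{E\cup B}$; the paper's restatement of $\sqsupseteq_{E,B}$ using $\congr{E}$ is a typo). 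One point you gloss over slightly: the side condition $\restrict{(\theta_1\sigma)}{W_\cap}\congr{B}\restrict{(\theta_2\sigma)}{W_\cap}$ in Definition~\ref{def:cap} requires equality modulo $B$, not $E\cup B$, so you need that the $\theta_i$ are already normalized (which they are, being variant substitutions) and that the matchers $\tau_i$ can be taken normalized as well; then the factorization through $\sigma$ preserves $B$-equality on $W_\cap$. This is the same irreducibility bookkeeping you already flagged, just applied once more.
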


The most recent version 3.0 of Maude \cite{maude-manual} incorporates variant-based unification based on the folding variant narrowing strategy \cite{ESM12}. First, there exists a variant generation command of the form: 

\noindent

{\small
\begin{verbatim}
  get variants [ n ] in ModId : Term .
\end{verbatim}
}

\noindent
where $n$ is an optional argument providing a bound on the number of variants requested, so that if the cardinality of the set of variants is greater than the specified bound, the variants beyond that bound are omitted; and \texttt{ModId} is the identifier of the module where the command takes place. Second, there exists a variant-based unification command of the form: 

\noindent

{\small
\begin{verbatim}
  variant unify [ n ] in ModId : T1 =? T1' /\ ... /\  Tk =? Tk' .
\end{verbatim}
}

\noindent
where $k\geq 1$ and $n$ is an optional argument providing a bound on the number of unifiers requested, so that if there are more unifiers, those beyond that bound are omitted; and \texttt{ModId} is the identifier of the module where the command takes place.

\begin{example}\label{ex:xor}
Consider the following equational theory for exclusive-or that assumes three extra constants \verb+a+, \verb+b+, and \verb+c+. 
The second equation is necessary for coherence modulo $AC$.

{\small
\begin{verbatim}
  fmod EXCLUSIVE-OR is 
    sorts Elem EXor . subsort Elem < EXor .
    ops a b c : -> Elem . op mt : -> EXor . op _*_ : EXor EXor -> EXor [assoc comm] .
    vars X Y Z U V : [EXor] .
    eq [idem] :     X * X = mt    [variant] .
    eq [idem-Coh] : X * X * Z = Z [variant] .
    eq [id] :       X * mt = X    [variant] .
  endfm
\end{verbatim}
}

\noindent
The attribute \verb+variant+ specifies that these equations will be used for variant-based unification. Since this theory has the finite variant property (see \cite{CD05,ESM12}), given the term \verb!X * Y! it is easy to verify that there are seven most general variants.

{\small
\begin{verbatim}
  Maude> get variants in EXCLUSIVE-OR : X * Y .

  Variant #1                             ...           Variant #7
  [EXor]: #1:[EXor] * #2:[EXor]          ...           [EXor]: %1:[EXor]
  X --> #1:[EXor]                        ...           X --> %1:[EXor]
  Y --> #2:[EXor]                        ...           Y --> mt
\end{verbatim}
}

\noindent  Note that Maude produces fresh variables of the form {\small\textit{\texttt{\#n:Sort}}} or {\small\textit{\texttt{\%n:Sort}}} using two different counters (see \cite{maude-manual} for details).
When we consider a variant unification problem between terms $X * Y$ and $U * V$, there are $57$ unifiers:

{\small
\begin{verbatim}
  Maude> variant unify in EXCLUSIVE-OR : X * Y =? U * V  .
  Unifier #1                            ...           Unifier #2
  X --> %1:[EXor] * %3:[EXor]           ...           X --> %1:[EXor] * %3:[EXor]
  Y --> %2:[EXor] * %4:[EXor]           ...           Y --> %2:[EXor]
  V --> %1:[EXor] * %2:[EXor]           ...           V --> %1:[EXor] * %2:[EXor]
  U --> %3:[EXor] * %4:[EXor]           ...           U --> %3:[EXor]
\end{verbatim}
}

\end{example}

However, this variant-based unification algorithm may compute many more unifiers than the necessary or may not be able to stop immediately. 
For instance, it is well-known that unification in the exclusive-or theory is unitary, i.e., there exists only one most general unifier modulo exclusive-or
\cite{KN87}.
For the unification problem $X * Y \? U * V$ of Example~\ref{ex:xor}, the  most general unifier w.r.t. $\sqsupseteq_{E\cup B}$ is 
$\{X \mapsto Y * U * V\}$,
which should be appropriately written as 
$\sigma=\{X \mapsto Y' * U' * V', Y \mapsto Y', U \mapsto U', V \mapsto V'\}.$
Note that 
$\{Y \mapsto X * U * V\}$,
$\{U \mapsto Y * X * V\}$,
and
$\{V \mapsto Y * U * X\}$ are equivalent to the former unifier w.r.t. $\sqsupseteq_{E\cup B}$
by composing $\sigma$ with, respectively,
$\rho_1=\{Y' \mapsto X'' * U'' * V'',X' \mapsto X'', U' \mapsto U'', V' \mapsto V''\}$,
$\rho_2=\{U' \mapsto Y'' * X'' * V'',X' \mapsto X'', Y' \mapsto Y'', V' \mapsto V''\}$,
and
$\rho_3=\{V' \mapsto Y'' * U'' * X'',X' \mapsto X'', U' \mapsto U'', Y' \mapsto Y''\}$.
Similarly,
$\{X \mapsto U, Y \mapsto V\}$ 
and
$\{X \mapsto V, Y \mapsto U\}$ 
are equivalent to all the previous ones.

Furthermore, since the variants of both terms are generated by Corollary~\ref{cor:finitary-unification}, 
there may be very simple unification problems such as $X \? t$ where the generation of the variants of $t$ is unnecessary.  
For example, when unifying terms $X$ and $U * V$, the variants of $U * V$ are generated

{\small
\begin{verbatim}
  Maude> variant unify in EXCLUSIVE-OR : X =? U * V  .

  Unifier #1                      Unifier #2           Unifier #3
  X --> %1:[EXor] * %2:[EXor]     X --> mt             X --> #2:[EXor] * #3:[EXor]
  V --> %1:[EXor]                 V --> #1:[EXor]      V --> #1:[EXor] * #2:[EXor]
  U --> %2:[EXor]                 U --> #1:[EXor]      U --> #1:[EXor] * #3:[EXor]
		
  Unifier #4	                   Unifier #5	                   Unifier #6
  X --> #1:[EXor]	               X --> #1:[EXor]	               X --> #1:[EXor]
  V --> #1:[EXor] * #2:[EXor]	   V --> #2:[EXor]	               V --> mt
  U --> #2:[EXor]	               U --> #1:[EXor] * #2:[EXor]	   U --> #1:[EXor]
		
  Unifier #7		
  X --> #1:[EXor]		
  V --> #1:[EXor]		
  U --> mt
\end{verbatim}
}

\noindent but it is clear that the simplest, most general unifier is $\{X \mapsto U * V\}$.
In \cite{ES19}, a new procedure to reduce the number of variant unifiers in situations like this was developed.
We showed that this new procedure pays off in practice using both the exclusive-or and the abelian group equational theories.

\section{Constructor-Root Variant-based Unification}\label{sec:mgvu-constructor-root}

Both the ``logic" notion of a functor
and 
the ``functional" 
notion of a constructor refer to a symbol not appearing in the root position of the left-hand side of any predicate or equation.
This notion of constructor
allows to split a signature $\Symbols$
as a disjoint union $\Symbols = \DSymbols \uplus \CSymbols$
where $\DSymbols$ are called \emph{defined} symbols and $\CSymbols$ are called \emph{constructor} symbols.
In a decomposition $(\Symbols,B,E)$,
the \emph{canonical term algebra} 
$\textit{Can}_{\Symbols/(E,B)}=\{t\norm{E,B} \mid t \in\GTerms\}$
is typically made of constructor terms,
but this more general notion of constructor differs from the ``logic" and ``functional" notions.
A decomposition $(\Symbols,B,E)$ \emph{protects} a \emph{constructor decomposition} $(\CSymbols, B_\CSymbols, E_\CSymbols)$ iff $\CSymbols \subseteq \Symbols$, $B_\CSymbols \subseteq B$, and $E_\CSymbols \subseteq E$,
and for all $t,t' \in \TermsOn{\CSymbols}{\Variables}{}{}{}$ we have: 
(i) $t \congr{B_\CSymbols} t' \iff t \congr{B} t'$, 
(ii) $t = t\norm{E_\CSymbols,B_\CSymbols} \iff t = t\norm{E,B}$, and 
(iii) $\textit{Can}_{\CSymbols/(E_\CSymbols,B_\CSymbols)} = \textit{Can}_{\Symbols/(E,B)}|_{\CSymbols}$.
A \emph{constructor decomposition} $(\CSymbols, B_\CSymbols,\emptyset)$ is called \emph{free}.
For instance, 
the modular exponentiation property typical of Diffie-Hellman protocols
is defined using 
two versions of the exponentiation operator
and an auxiliary associative-commutative symbol $*$ for exponents so that $(z^x)^y = (z^y)^x=z^{x*y}$. 
Note that, in the lefthand side of the equation, the outermost exponentiation operator is defined, whereas
the innermost  exponentiation operator is constructor.

{\small
\begin{verbatim}
fmod DH-CFVP is
  sorts Exp Elem ElemSet Gen .  subsort Elem < ElemSet .
  ops a b c : -> Elem [ctor] .
  op exp : Gen ElemSet -> Exp [ctor] .
  op exp : Exp ElemSet -> Exp .
  op _*_ : ElemSet ElemSet -> ElemSet [assoc comm ctor] .
  var X : Gen . vars Y Z : ElemSet .
  eq exp(exp(X,Y),Z) = exp(X,Y * Z) [variant] .
endfm
\end{verbatim}
}

\noindent
Note that it may not always be possible to provide a (free) constructor decomposition, such as Example~\ref{ex:xor}
where the exclusive-or symbol works both as defined and constructor
(see \cite{maude-manual} for a detailed discussion).
However, it is common to combine an equational theory with many different additional constructor symbols,
as shown in Section~\ref{sec:exp}.

The notion of a constructor symbol has not yet been integrated into
the variant-based equational unification procedure of Maude.
An integration of the notion of constructor involves two challenges.
On the one hand, when we consider the variant unification problem above between terms $X$ and $U * V$,
the fast unification algorithm of \cite{ES19} is able to return only one unifier but still generates all the variants of term $U * V$,
unnecessarily consuming  resources. 
On the other hand, a unification problem between terms $f(X * Y)$ and $g(U * V)$ where $f$ and $g$ are different constructor symbols 
forces the generation of
all the variants of the terms $X* Y$ and $U * V$ wasting resources.
Let us consider a unification problem $C_1[X] \? C_2[t]$
where
both $C_1$ and $C_2$ are made of constructor symbols and
either there exists $\sigma$ s.t. $C_1[\Box]\sigma =_{B} C_2[\Box]\sigma$ 
or there is no such $\sigma$. 

\begin{definition}[Constructor-root Position]
Given a decomposition $(\Symbols,B,E)$ 
protecting a free constructor decomposition $(\CSymbols, B_\CSymbols,\emptyset)$
and given
a $\Symbols$-term $t$ and a position $p\in\Pos(t)$,
we say $p$ is a \emph{constructor-root} position in $t$
if 
for all $q < p$, $root(t|_{q})\in\CSymbols$.
\end{definition}

\begin{definition}[Constructor-root Variable]
Given a $\Symbols$-term $t$ and a variable $x$,
we say $x$ is a \emph{constructor-root} variable in $t$
if 
for all 
$p\in\occSub{t}{x}$, $p$ is constructor-root in $t$.
\end{definition}

First, we define the case 
when there exists $\sigma$ s.t. $C_1[\Box]\sigma =_{B} C_2[\Box]\sigma$.
Intuitively, a variant unifier $\sigma$ of $t_1$ and $t_2$ is constructor-root if each variable in $\range{\sigma}$ 
is under a constructor-root variable of $t_1$ and $t_2$.

\begin{definition}[Constructor-root Variant Unifier]\label{def-new}
Given a decomposition $(\Symbols,B,E)$ 
protecting a free constructor decomposition $(\CSymbols, B_\CSymbols, \emptyset)$,
two $\Symbols$-terms $t_1$ and $t_2$ s.t.
$W_\cap = \var{t_1}\cap\var{t_2}$,
$W_\cup = \var{t_1}\cup\var{t_2}$,
$(u_1,\theta_1) \in \sem{t_1}$,
$(u_2,\theta_2) \in \sem{t_2}$,
and
$\sigma \in \csuV{u_1 = u_2}{W_\cup}{B}$ 
s.t.
$\restrict{(\theta_1\sigma)}{W_\cap} 
\congr{B} 
\restrict{(\theta_2\sigma)}{W_\cap}$,
the unifier 
$(\theta_1\cup\theta_2)\sigma$
is called \emph{constructor-root}
if
for each $x \mapsto t \in \sigma$,
either 
(i)
$x \mapsto t$ is a variable renaming,
(ii)
$x$ is a constructor-root variable in $u_1$ and $u_2$,
or 
(iii)
for each $x' \mapsto t'\in \sigma\setminus\{x \mapsto t\}$ 
(and there exists at least one such binding) s.t. 
$t' =_{B} C[t]$,
then
$x'$ is a constructor-root variable in $u_1$ and $u_2$.
\end{definition}

Let us motivate the usefulness of a constructor-root unifier.
Given the unification problem $\mathtt{X}\ \?\ \mathtt{V * U}$ above, 
the  unifier
$\{\mathtt{X} \mapsto \mathtt{\%1 * \%2}$, $\mathtt{V}\mapsto\mathtt{\%1}$, $\mathtt{U}\mapsto\mathtt{\%2}\}$
is constructor-root, since \texttt{X} is a constructor-root variable in the left unificand
and \texttt{V} and \texttt{U} are not constructor-root variables but the variables $\%1$ and $\%2$ used in the bindings of \texttt{V} and \texttt{U}
appear  in the binding of \texttt{X}.
Hence, we can safely avoid the generation of the variants of $V * U$.
Note that
the  unifier
$\{\mathtt{X} \mapsto \mathtt{mt}$, $\mathtt{V}\mapsto\mathtt{\%1}$, $\mathtt{U}\mapsto\mathtt{\%1}\}$
is not constructor-root 
because 
$\mathtt{V}$ and $\mathtt{U}$ are not constructor-root variables
and
for the bindings $\mathtt{V}\mapsto\mathtt{\%1}$ and $\mathtt{U}\mapsto\mathtt{\%1}$
there is no other binding $x' \mapsto t'$ such that $\mathtt{\%1}$ is a subterm of $t'$
and $x'$ is a constructor-root variable.

\begin{lemma}[Constructor-root Variant Unifier]\label{lem-new}
Given a decomposition $(\Symbols,B,E)$ 
protecting a free constructor decomposition $(\CSymbols, B_\CSymbols, \emptyset)$,
two $\Symbols$-terms $t_1$ and $t_2$ s.t.
$W_\cap = \var{t_1}\cap\var{t_2}$,
$W_\cup = \var{t_1}\cup\var{t_2}$,
$(u_1,\theta_1) \in \sem{t_1}$,
$(u_2,\theta_2) \in \sem{t_2}$,
and
a {\em constructor-root variant unifier}
$\sigma \in \csuV{u_1 = u_2}{W_\cup}{B}$
s.t. 
$\restrict{(\theta_1\sigma)}{W_\cap} 
\congr{B} 
\restrict{(\theta_2\sigma)}{W_\cap}$,
then 
$\forall (u'_1,\theta'_1) \in \sem{t_1}$ s.t. $(u'_1,\rho) \in \sem{u_1}$ and $\restrict{\theta'_1}{W_\cup} \congr{B} \restrict{\theta_1\rho}{W_\cup}$,
if there exists 
$\sigma' \in \csuV{u'_1 = u_2}{W_\cup}{B}$ s.t.
$\restrict{(\theta'_1\sigma')}{W_\cap} 
\congr{B} 
\restrict{(\theta_2\sigma')}{W_\cap}$,
then
$\restrict{((\theta_1\cup\theta_2)\sigma)}{W_\cup}$ and $\restrict{((\theta'_1\cup\theta_2)\sigma')}{W_\cup}$ are both equational unifiers of $t_1$ and $t_2$
but 
$\restrict{((\theta_1\cup\theta_2)\sigma)}{W_\cup} \sqsupseteq_{E\cup B} \restrict{((\theta'_1\cup\theta_2)\sigma')}{W_\cup}$.
Similarly for any $(u'_2,\theta'_2) \in \sem{t_2}$.
\end{lemma}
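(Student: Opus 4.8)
The plan is to prove the two assertions of the lemma in turn: that the two substitutions are $E\cup B$-unifiers of $t_1$ and $t_2$, which is routine, and that the first is $\sqsupseteq_{E\cup B}$ the second, which is where the constructor-root hypothesis must be used essentially. For the first assertion, observe that $(u_1,\theta_1)$, $(u_2,\theta_2)$, $\sigma$ (together with the side condition $\restrict{(\theta_1\sigma)}{W_\cap}\congr{B}\restrict{(\theta_2\sigma)}{W_\cap}$), and likewise $(u'_1,\theta'_1)$, $(u_2,\theta_2)$, $\sigma'$, are precisely witnesses that $\restrict{((\theta_1\cup\theta_2)\sigma)}{W_\cup}$ and $\restrict{((\theta'_1\cup\theta_2)\sigma')}{W_\cup}$ are substitution components of elements of $\sem{t_1}\cap\sem{t_2}$ in the sense of Definition~\ref{def:cap}; so by Corollary~\ref{cor:finitary-unification} both are $E\cup B$-unifiers of $t_1=t_2$.

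For the subsumption I would first reduce it to a statement about $u_1$ and $u_2$ alone. Writing $W=\var{u_1}\cup\var{u_2}$ and using the freshness convention for variant terms (so $\var{u_1}\cap\var{u_2}=\emptyset$ and $\domain{\rho}\subseteq\var{u_1}$ acts as the identity on the variables of $u_2$), the hypothesis $\restrict{\theta'_1}{W_\cup}\congr{B}\restrict{\theta_1\rho}{W_\cup}$ gives $\restrict{((\theta'_1\cup\theta_2)\sigma')}{W_\cup}\congr{B}\restrict{((\theta_1\cup\theta_2)\pi)}{W_\cup}$, where $\pi$ is the substitution that coincides with $\rho\sigma'$ on $\var{u_1}$ and with $\sigma'$ on $\var{u_2}$. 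Hence it suffices to produce $\eta$ with $\restrict{(\sigma\eta)}{W}\congr{E\cup B}\restrict{\pi}{W}$, i.e. to show $\restrict{\sigma}{W}\sqsupseteq_{E\cup B}\restrict{\pi}{W}$; the same witness $\eta$ then serves for the combined substitutions. I would also record along the way that $\pi$ is itself an $E\cup B$-unifier of $u_1=u_2$: from $u_1\rho\norm{E,B}\congr{B}u'_1$ we get $u_1\rho\congr{E\cup B}u'_1$, hence $u_1\pi=u_1\rho\sigma'\congr{E\cup B}u'_1\sigma'\congr{B}u_2\sigma'=u_2\pi$.

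The heart of the proof — and the step I expect to be the main obstacle — is to show that this particular $E\cup B$-unifier $\pi$ is subsumed modulo $E\cup B$ by the $B$-unifier $\sigma$. In general a complete set of $B$-unifiers is strictly weaker than one modulo $E\cup B$, so the point is that the additional variant narrowing recorded in $\rho$ is here redundant relative to a constructor-root unifier. Concretely: because the decomposition protects a \emph{free} constructor decomposition, every pure-constructor term is $E,B$-irreducible (condition (ii) of ``protects'' with $E_\CSymbols=\emptyset$), so the variant narrowing $u_1\narrows{\rho}{E,B}u'_1$ never fires strictly inside a constructor-rooted subterm and only touches the defined-symbol part of $u_1$; on the other hand, by Definition~\ref{def-new} every non-renaming binding $x\mapsto t$ of $\sigma$ places $t$ below a constructor symbol of $u_1$ and of $u_2$, either directly ($x$ constructor-root, case (ii)) or as a subterm of the binding $t'$ of a constructor-root variable $x'$ (case (iii)), the remaining bindings being renamings. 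The plan is to turn this into a lifting lemma: the narrowing $u_1\narrows{\rho}{E,B}u'_1$ commutes through $\sigma$, yielding a variant narrowing of $u_1\sigma$ — equivalently of $u_2\sigma$, since $u_1\sigma\congr{B}u_2\sigma$ — whose induced variant unifier of $u_1=u_2$ is, by case (iii) (which keeps the variables injected into the defined part freely instantiable via a constructor-root variable), an $E\cup B$-instance of $\sigma$. Chasing this back gives $\restrict{\sigma}{W}\sqsupseteq_{E\cup B}\restrict{\pi}{W}$, and feeding $\eta$ through the reduction of the previous paragraph yields $\restrict{((\theta_1\cup\theta_2)\sigma)}{W_\cup}\sqsupseteq_{E\cup B}\restrict{((\theta'_1\cup\theta_2)\sigma')}{W_\cup}$. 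The statement for a variant $(u'_2,\theta'_2)\in\sem{t_2}$ follows by the symmetric argument, exchanging the roles of $t_1$ and $t_2$.
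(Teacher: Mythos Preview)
Your reduction is sound and matches the paper's in spirit: both first observe that the two composite substitutions are $E\cup B$-unifiers (you via Definition~\ref{def:cap} and Corollary~\ref{cor:finitary-unification}; the paper takes this as evident), and both then isolate the subsumption question to the effect of $\sigma$ versus $\sigma'$ on the variables of $u_2$, using that $\restrict{\theta'_1}{W_\cup}\congr{B}\restrict{(\theta_1\rho)}{W_\cup}$ cancels the $t_1$-side difference. Where you diverge is in how you discharge that residual comparison. You propose a \emph{lifting} argument: push the narrowing $u_1\narrows{\rho}{E,B}u'_1$ through $\sigma$ and argue, via the free-constructor hypothesis and cases (ii)/(iii) of Definition~\ref{def-new}, that the resulting variant unifier is an $E\cup B$-instance of $\sigma$. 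The paper instead argues by contradiction and works pointwise on $\var{u_2}$: for each $y\in\var{u_2}$ with $y\mapsto w_1$ in $\restrict{\sigma}{\var{u_2}}$ and $y\mapsto w_2$ in $\restrict{\sigma'}{\var{u_2}}$, the constructor-root hypothesis forces $(w_2,\rho)\in\sem{w_1}$, i.e.\ $w_2$ is an $E,B$-variant of $w_1$, whence $\restrict{\sigma}{\var{u_2}}\sqsupseteq_{E\cup B}\restrict{\sigma'}{\var{u_2}}$ directly. The paper's route is shorter and avoids formulating a separate commutation lemma; your route makes the role of the free-constructor protection (irreducibility of pure-constructor terms) more explicit and would generalise more readily. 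One phrasing to tighten in your sketch: narrowing \emph{can} fire at a constructor-root position (the definition constrains only strict ancestors), so ``never fires strictly inside a constructor-rooted subterm'' should be recast as ``every redex position in $u_1$ lies at or below a constructor-root position whose own symbol is defined''; this is what you actually need to transport $\rho$ into the image of $\sigma$.
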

\begin{proof}
By contradiction.
Let us assume 
$\exists\sigma' \in \csuV{u'_1 = u_2}{W_\cup}{B}$
s.t. 
$\restrict{((\theta_1\cup\theta_2)\sigma)}{W_\cup} \not\sqsupseteq_{E\cup B} \restrict{((\theta'_1\cup\theta_2)\sigma')}{W_\cup}$.
First, $\restrict{\theta'_1}{t_1} = \restrict{\theta_1}{t_1}\restrict{\rho}{u_1}$ and thus the difference is in $\restrict{\sigma}{u_2}$ and $\restrict{\sigma'}{u_2}$.
By the constructor-root property,
$\forall x \mapsto t \in \restrict{\sigma}{u_2}$
either $x$ is a constructor-root variable in $u_2$
or 
$\forall x' \mapsto t'\in \restrict{(\sigma\setminus\{ x \mapsto t\})}{u_2}$ s.t. 
$t' =_{B} C[t]$,
$x'$ is a constructor-root variable in $u_2$.
But then $\forall y\in\var{u_2}$,
there exist $y \mapsto w_1\in\restrict{\sigma}{u_2}$ and $y \mapsto w_2\in\restrict{\sigma'}{u_2}$
and $(w_2,\rho) \in \sem{w_1}$,
i.e.,
$\restrict{\sigma}{u_2} \sqsupseteq_{E\cup B} \restrict{\sigma'}{u_2}$, which contradicts the assumption.
\end{proof}

We define the case 
when there is no $\sigma$ s.t. $C_1[\Box]\sigma =_{B} C_2[\Box]\sigma$.
Intuitively, two terms that form a constructor-root failure pair will never unify despite any further variant computation.

\begin{definition}[Constructor-root Failure Pair]\label{def-new-neg}
Given a decomposition $(\Symbols,B,E)$ 
protecting a free constructor decomposition $(\CSymbols, B_\CSymbols, \emptyset)$,
two $\Symbols$-terms $t_1$ and $t_2$ s.t.
$W_\cap = \var{t_1}\cap\var{t_2}$,
$W_\cup = \var{t_1}\cup\var{t_2}$,
$(u_1,\theta_1) \in \sem{t_1}$,
and
$(u_2,\theta_2) \in \sem{t_2}$,
the pair $(u_1,u_2)$ is a \emph{constructor-root failure pair}
if 
$\csuV{u_1 = u_2}{W_\cup}{B}=\emptyset$ 
and
there exists two constructor contexts $C_1[\Box,\ldots,\Box]$ and $C_2[\Box,\ldots,\Box]$,
terms $v_1,\ldots,v_n,w_1,\ldots,w_m$,
and fresh distinct variables $x_1,\ldots,x_n,y_1,\ldots,y_m$
s.t.
$u_1 =_B C_1[v_1,\ldots,v_n]$, 
$u_2 =_B C_2[w_1,\ldots,w_m]$, 
and 
$\csuV{C_1[x_1,\ldots,x_n] = C_2[y_1,\ldots,y_m]}{W_\cup}{B}=\emptyset$ .
\end{definition}

Let us motivate the usefulness of a constructor-root failure pair.
Given the unification problem $\mathtt{f(X * Y)}\ \?\ \mathtt{g(V * U)}$ above
where $f$ and $g$ are different constructor symbols without axioms,
the two terms do not unify modulo the axioms of $*$ but 
neither $f(W)$ and $g(W')$ do.
Hence, we can safely avoid the generation of the variants of $V * U$ and $X * Y$.

\begin{lemma}[Constructor-root Failure Pair]\label{lem-new-neg}
Given a decomposition $(\Symbols,B,E)$ 
protecting a free constructor decomposition $(\CSymbols, B_\CSymbols, \emptyset)$,
two $\Symbols$-terms $t_1$ and $t_2$ s.t.
$W_\cup = \var{t_1}\cup\var{t_2}$,
$(u_1,\theta_1) \in \sem{t_1}$,
$(u_2,\theta_2) \in \sem{t_2}$,
and
$(u_1,u_2)$ is a constructor-root failure term, 
then 
$\csuV{u_1 = u_2}{W_\cup}{E\cup B}=\emptyset$.
\end{lemma}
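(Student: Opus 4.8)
The plan is to argue by contradiction, reducing the claim to the purely $B$-level non-unifiability that Definition~\ref{def-new-neg} already guarantees. So I would assume there is an $E\cup B$-unifier $\rho$ of $u_1 = u_2$, i.e. $u_1\rho \congr{E\cup B} u_2\rho$. Since $(\Symbols,B,E)$ is a decomposition, $\overrightarrow{E}$ is convergent modulo $B$ (confluent, terminating, strictly coherent), so $\rewrite{\overrightarrow{E},B}$ implements $\rewrite{\overrightarrow{E}/B}$ and reduction to canonical form decides $\congr{E\cup B}$; hence $(u_1\rho)\norm{E,B} \congr{B} (u_2\rho)\norm{E,B}$, and I fix a term $s$ in this common $B$-class. (The substitutions $\theta_1,\theta_2$ in the statement play no role in this lemma; they matter only for its use elsewhere.)

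The heart of the argument is a \emph{constructor-skeleton preservation} property that I would isolate as an auxiliary lemma: if $v \congr{B} D\delta$, where $D$ is a term built from $\CSymbols$ and fresh pairwise-distinct variables each occurring once at a constructor-root position of $D$ and $\delta$ is a substitution on those variables, and $v \rewrites{E,B} v'$, then $v' \congr{B} D\delta'$ for some $\delta'$. In words, $E,B$-rewriting cannot erode the maximal all-constructor prefix. Granting this, I apply it to $u_1\rho$: from the hypothesis $u_1 \congr{B} C_1[v_1,\dots,v_n]$, where $C_1[x_1,\dots,x_n]$ is exactly such a $D$ (a constructor context with holes at constructor-root positions), we get $u_1\rho \congr{B} C_1[x_1,\dots,x_n]\delta_1$ with $\delta_1=\{x_i\mapsto v_i\rho\}$; since $u_1\rho \rewrites{E,B} (u_1\rho)\norm{E,B}$ and $s \congr{B} (u_1\rho)\norm{E,B}$, the auxiliary lemma yields $s \congr{B} C_1[x_1,\dots,x_n]\tau_1$ for some $\tau_1$. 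Symmetrically, using $u_2 \congr{B} C_2[w_1,\dots,w_m]$, we obtain $s \congr{B} C_2[y_1,\dots,y_m]\tau_2$ for some $\tau_2$.

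The contradiction is then immediate. The variables $x_1,\dots,x_n,y_1,\dots,y_m$ are fresh and pairwise distinct, so $\domain{\tau_1}\cap\domain{\tau_2}=\emptyset$ and $\tau := \tau_1\cup\tau_2$ satisfies $C_1[x_1,\dots,x_n]\tau \congr{B} s \congr{B} C_2[y_1,\dots,y_m]\tau$; that is, $\tau$ is a $B$-unifier of $C_1[x_1,\dots,x_n] = C_2[y_1,\dots,y_m]$. But $B$ has a finitary complete unification algorithm, so $\csuV{C_1[x_1,\dots,x_n] = C_2[y_1,\dots,y_m]}{W_\cup}{B}$ is a \emph{complete} set of $B$-unifiers, and by Definition~\ref{def-new-neg} it is empty — which, for a complete set, forces that no $B$-unifier whatsoever exists. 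This contradicts the existence of $\tau$, so there is no $E\cup B$-unifier of $u_1=u_2$, i.e. $\csuV{u_1 = u_2}{W_\cup}{E\cup B}=\emptyset$.

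I expect the skeleton-preservation auxiliary lemma to be the only real difficulty. The worry is that some equation of $E$ might apply at, or straddle, a position occupied by the all-constructor prefix, since the ``general'' notion of constructor does not a priori forbid a constructor at the root of a left-hand side. The way out is that $(\Symbols,B,E)$ \emph{protects} the free constructor decomposition $(\CSymbols,B_\CSymbols,\emptyset)$: condition (ii) forces every $\CSymbols\cup\Variables$-term to be $E,B$-irreducible, so inspecting ground instances shows that any constructor-rooted left-hand side must carry a defined symbol strictly below its root; confluence and strict coherence modulo $B$ together with condition (iii) then confine every such step to the interior of some hole-filler $v_i\rho$, leaving the constructor prefix intact up to $B$-rearrangement of AC constructors (here one uses condition (i), namely $B$ restricted to $\CSymbols$ is just $B_\CSymbols$, plus the fact that $B$ is regular and linear, the collapsing identity axioms having been oriented into $E$ rather than kept in $B$). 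An equivalent, perhaps cleaner packaging is: constructors cannot be narrowed, hence every variant of $u_1$ is a $B$-instance of $C_1[x_1,\dots,x_n]$ and every variant of $u_2$ a $B$-instance of $C_2[y_1,\dots,y_m]$; combining this with the completeness of variant-based unification (Corollary~\ref{cor:finitary-unification}) gives the conclusion in the same way, at the cost of invoking the finite variant machinery.
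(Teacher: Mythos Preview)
Your argument is correct, and in fact considerably more careful than what the paper offers: the paper's entire proof is the single sentence ``Immediate by Definition~\ref{def-new-neg}.'' You have unpacked precisely the content that makes it ``immediate'' --- namely, that the all-constructor prefix $C_i$ survives $E,B$-normalization because the free constructor decomposition is protected, so any $E\cup B$-unifier of $u_1=u_2$ would induce a $B$-unifier of the abstracted contexts $C_1[x_1,\dots,x_n]=C_2[y_1,\dots,y_m]$, contradicting the definition.

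Your two packagings (direct skeleton-preservation under $\rewrites{E,B}$ versus ``constructors cannot be narrowed, hence every variant retains the prefix, then invoke Corollary~\ref{cor:finitary-unification}'') are both sound and amount to the same observation; the second is closer in spirit to how the paper reasons elsewhere, and is almost certainly what the authors have in mind when they declare the lemma immediate. Either way, there is no gap in your proposal --- if anything, you have supplied the justification the paper omits.
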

\begin{proof}
Immediate by Definition~\ref{def-new-neg}.
\end{proof}

These positive and negative stopping criteria, however, become useful only if we do not generate all the variants a priori,
as it is done in Corollary \ref{cor:finitary-unification} as well as the fast unification technique of \cite{ES19}.
Some incremental generation of variants is required. 

\begin{example}\label{ex:fg}
Consider the following theory 
where constructors have the \verb!ctor! attribute.

{\small
\begin{verbatim}
  fmod FASTvsCR is sort S . 
    ops a b c : -> S [ctor] . op s : S -> S [ctor] .
    op g : S S -> S .         op f : S S S -> S .      vars X Y Z W : S .
    eq f(a,X,Y) = s(Y) [variant] .
    eq f(b,X,Y) = g(X,Y) [variant] .
    eq g(c,Y) = s(Y) [variant] .
  endfm
\end{verbatim}
}

\noindent
Consider  the unification problem (a) $f(X,Y,Z) = s(W)$ with only two 
unifiers and its variant generation.

\noindent
\begin{tabular}{c@{}c@{}}
\begin{minipage}[t]{.5\linewidth}
{\small
\begin{verbatim}
variant unify in FG : f(X, Y, Z) =? s(W) .

  Unifier #1        Unifier #2
  X --> a           X --> b
  Y --> #2:S        Y --> c
  Z --> #1:S        Z --> %1:S
  W --> #1:S        W --> %1:S
\end{verbatim}
}
\end{minipage}
&
$$
\xymatrix@R=1.2pc@C=1pc{
&\mathtt{f(X,Y,Z)}\ar@{~>}_{\{\mathtt{X}\mapsto\mathtt{a}\}}[dl]\ar@{~>}_{\{\mathtt{X}\mapsto\mathtt{b}\}}[dr] & \? & \mathtt{s(W)}\\
\mathtt{s(Z)} & & \mathtt{g(Y,Z)}\ar@{~>}_{\{\mathtt{Y}\mapsto\mathtt{c}\}}[d]\\
& & \mathtt{s(Z)}
}
$$
\end{tabular}
\vspace{2ex}

Let us assume we have an expression $\clubsuit$ with a considerably large narrowing tree
and two new unification problems (b) $f(X,Y,\clubsuit) = s(W)$ and (c) $f(X,\clubsuit,Z) = s(W)$.
Note that the unifiers of (a) 
are still valid for (b), whereas only the first unifier of 
(a) 
is valid for (c), 
assuming $\clubsuit$ never narrows into $c$.
Both the variant-based unification command of Maude and the fast command of \cite{ES19}
cannot avoid the computation of $\clubsuit$ in both unification problems (b) and (c).
However, the technique described below is able to avoid the full computation of $\clubsuit$ in (b), since the two unifiers are constructor-root,
although it 
cannot avoid the full computation of $\clubsuit$ in (c).
\end{example}

We extend the notions of constructor-root unifier and constructor-root failure pair
to the pairwise combination of all the variants of a unification problem.

\begin{definition}[Constructor-Root Intersection]
Given a decomposition $(\Symbols,B,E)$ 
protecting a free constructor decomposition $(\CSymbols, B_\CSymbols, \emptyset)$,
two $\Symbols$-terms $t_1$ and $t_2$ such that $W_\cap = \var{t_1}\cap\var{t_2}$ and $W_\cup = \var{t_1}\cup\var{t_2}$, and two sets $V_1$ and $V_2$ of variants of $t_1$ and $t_2$, respectively, we 
say that an intersection $V_1 \cap V_2$ is \emph{constructor-root} if 
for each leaf $(u_1,\theta_1) \in V_1$ (resp. $(u_2,\theta_2) \in V_2$),
and
for each leaf $(u_2,\theta_2) \in V_2$ (resp. $(u_1,\theta_1) \in V_1$) such that 
$\sigma \in \csuV{u_1 = u_2}{W_\cup}{B}$ and
$\restrict{(\theta_1\sigma)}{W_\cap} 
\congr{B} 
\restrict{(\theta_2\sigma)}{W_\cap}$,
we have 
$(\theta_1\cup\theta_2)\sigma$
is \emph{constructor-root}.
\end{definition}

\begin{definition}[Failure Intersection]
Given a decomposition $(\Symbols,B,E)$ 
protecting a free constructor decomposition $(\CSymbols, B_\CSymbols, \emptyset)$
two $\Symbols$-terms $t_1$ and $t_2$ such that $W_\cap = \var{t_1}\cap\var{t_2}$ and $W_\cup = \var{t_1}\cup\var{t_2}$, and two sets $V_1$ and $V_2$ of variants of $t_1$ and $t_2$, respectively, we 
say that an intersection $V_1 \cap V_2$ is a \emph{failure intersection} if 
for each leaf $(u_1,\theta_1) \in V_1$ (resp. $(u_2,\theta_2) \in V_2$),
and
for each leaf $(u_2,\theta_2) \in V_2$ (resp. $(u_1,\theta_1) \in V_1$) such that 
$\sigma \in \csuV{u_1 = u_2}{W_\cup}{B}$ and
$\restrict{(\theta_1\sigma)}{W_\cap} 
\congr{B} 
\restrict{(\theta_2\sigma)}{W_\cap}$,
we have 
the pair $(u_1,u_2)$ is a constructor-root failure pair.
\end{definition}

The following example shows that the folding variant narrowing trees of both terms $t_1,t_2$ of a unification problem $t_1 = t_2$
must be unfolded down to a frontier where all leaves of $t_1$ are tested for unification with all the leaves of $t_2$.

\begin{example}\label{ex:fg2}
Let us consider 
Example~\ref{ex:fg} 
and the unification problem $f(X,Y,Z) = f(U,V,W)$.

{\small
\begin{verbatim}
variant unify in FG : f(X, Y, Z) =? f(U, V, W)  .

  Unifier #1           Unifier #2           Unifier #3           Unifier #4
  X --> %1:S           X --> a              X --> a              X --> b
  Y --> %2:S           Y --> %2:S           Y --> #2:S           Y --> c
  Z --> %3:S           Z --> %1:S           Z --> #1:S           Z --> #1:S
  U --> %1:S           U --> a              U --> b              U --> a
  V --> %2:S           V --> %3:S           V --> c              V --> #2:S
  W --> %3:S           W --> %1:S           W --> #1:S           W --> #1:S
\end{verbatim}
}

$$
\xymatrix@R=1pc{
&\mathtt{f(X,Y,Z)}\ar@{~>}_{\{\mathtt{X}\mapsto\mathtt{a}\}}[dl]\ar@{~>}_{\{\mathtt{X}\mapsto\mathtt{b}\}}[dr] 
& & \? &
& \mathtt{f(U,V,W)}\ar@{~>}_{\{\mathtt{U}\mapsto\mathtt{a}\}}[dl]\ar@{~>}_{\{\mathtt{X}\mapsto\mathtt{b}\}}[dr] & \\
\mathtt{s(Z)} & & \mathtt{g(Y,Z)}\ar@{~>}_{\{\mathtt{V}\mapsto\mathtt{c}\}}[d] & &
\mathtt{s(W)} & & \mathtt{g(V,W)}\ar@{~>}_{\{\mathtt{V}\mapsto\mathtt{c}\}}[d]
\\
& & \mathtt{s(Z)} & &
& & \mathtt{s(W)}
}
$$

\noindent
The terms at the top position of both narrowing trees clearly unify, but the unifier is not constructor-root, so we must continue expanding both narrowing trees.
The condition that all leaves of $t_1$ are unifiable with all the leaves of $t_2$ is reached only at depth $2$. 
Indeed, if we expand  the left unificand completely but the right unificand only down to the leftmost branch,
then the two leaves of the narrowing tree of the left  unificand unify with the leftmost leaf of the narrowing tree of the right unificand,
but we may miss the last two unifiers reported above if we stop here.
\end{example}
 
We define variant-based unification as the computation of the variants of the two terms in a unification problem. 
We abuse the notation and write 
$\mathbb{P}(\sem{t})$ for the powerset of all the subsets of $\sem{t}$ such that
each $V\in\mathbb{P}(\sem{t})$ corresponds to the variants of a term $t$ associated to a particular narrowing tree produced by the folding variant narrowing strategy from term $t$.
We also write $\csuV{t = t'}{\cap,V_1,V_2}{E\cup B}$ for a version of the unification algorithm of Corollary~\ref{cor:finitary-unification} that uses sets $V_1$ and $V_2$ of variants of $t$ and $t'$, respectively, instead of generating all the variants.

\begin{definition}[Constructor-Root Variant-based Unification]\label{def:mgvu-new}
Let $(\Symbols,B,E)$ be a finite variant decomposition of an equational theory
protecting a free constructor decomposition $(\CSymbols, B_\CSymbols, \emptyset)$.
Given two terms $t,t'$ and two sets of variants $V_1\in\mathbb{P}(\sem{t})$, $V_2\in\mathbb{P}(\sem{t'})$,
the constructor-root variant unifiers are

\noindent
$$\csuV{t = t'}{\overline{\cap}}{E\cup B}
=
\left\{
\begin{array}{@{}c@{\ }l@{}}
\emptyset &\mbox{if } \exists V_1\in\mathbb{P}(\sem{t}), V_2\in\mathbb{P}(\sem{t'}),\\
& \mbox{and they are the smallest sets s.t }\\
& V_1\cap V_2 \mbox{ is a failure intersection}\\
\csuV{t = t'}{\cap,V_1,V_2}{E\cup B} &\mbox{if } \exists V_1\in\mathbb{P}(\sem{t}), V_2\in\mathbb{P}(\sem{t'}),\\
& \mbox{and they are the smallest sets s.t }\\
& V_1\cap V_2 \mbox{ is constructor-root}\\
\csuV{t = t'}{\cap,\sem{t},\sem{t'}}{E\cup B} &\mbox{otherwise}\\
\end{array}\right.
$$

\end{definition}

\begin{proposition}[Constructor-Root Variant-based Unification]\label{prop:mgvu-new}
Let $(\Symbols,B,E)$ be a finite variant decomposition of an equational theory
protecting a free constructor decomposition $(\CSymbols, B_\CSymbols, \emptyset)$.
Given two terms $t,t'$, the set 
$\csuV{t = t'}{\overline{\cap}}{E\cup B}$ is a \emph{finite and complete} set of unifiers for $t = t'$.
\end{proposition}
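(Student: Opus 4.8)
The plan is to argue by cases on which of the three clauses of Definition~\ref{def:mgvu-new} applies, proving finiteness and completeness separately in each case. The one structural ingredient I will use throughout is that every $V\in\mathbb{P}(\sem{t})$ is a \emph{complete frontier} for $t$: by completeness of folding variant narrowing~\cite{ESM12}, each variant $(u',\theta')\in\sem{t}$ is subsumed by a refinement of some leaf $(u,\theta)\in V$, i.e.\ there is $(u',\rho)\in\sem{u}$ with $\restrict{\theta'}{\var{t}}\congr{B}\restrict{(\theta\rho)}{\var{t}}$, and symmetrically for $t'$. Since $\sem{t}$ is itself such a frontier and is finite by the FVP, the ``otherwise'' clause is always available, so the three-way split is well defined; moreover in that clause the output coincides with $\csuV{t=t'}{\cap}{E\cup B}$, so Corollary~\ref{cor:finitary-unification} gives finiteness and completeness at once.

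For the remaining two clauses, finiteness is easy: the ``failure'' clause returns $\emptyset$, and in the ``constructor-root'' clause $V_1\subseteq\sem{t}$ and $V_2\subseteq\sem{t'}$ are finite while each $\csuV{u_1=u_2}{W_\cup}{B}$ is finite because $B$ has a finitary complete unification algorithm, so $\csuV{t=t'}{\cap,V_1,V_2}{E\cup B}$ is finite. For completeness of the ``failure'' clause, by Corollary~\ref{cor:finitary-unification} it suffices to show $\sem{t}\cap\sem{t'}=\emptyset$. If it were not, an element of it would arise from leaves $(u_1',\theta_1')\in\sem{t}$, $(u_2',\theta_2')\in\sem{t'}$ and some $\sigma'\in\csuV{u_1'=u_2'}{W_\cup}{B}$ meeting the $W_\cap$-condition; but picking the frontier leaves $(u_1,\theta_1)\in V_1$, $(u_2,\theta_2)\in V_2$ that $u_1',u_2'$ refine, Lemma~\ref{lem-new-neg} forces $\csuV{u_1=u_2}{W_\cup}{E\cup B}=\emptyset$, while $\sigma'$ together with Corollary~\ref{cor:finitary-unification} applied to $u_1=u_2$ exhibits a nonempty such set, a contradiction.

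Completeness of the ``constructor-root'' clause is the heart of the argument. Given an arbitrary $E\cup B$-unifier $\rho$ of $t=t'$, Corollary~\ref{cor:finitary-unification} supplies $(w,\theta)\in\sem{t}\cap\sem{t'}$ with $\restrict{\theta}{W_\cup}\sqsupseteq_{E\cup B}\restrict{\rho}{W_\cup}$, coming from leaves $(a_1,\theta_1)\in\sem{t}$, $(a_2,\theta_2)\in\sem{t'}$ and $\tau\in\csuV{a_1=a_2}{W_\cup}{B}$ with $\theta=(\theta_1\cup\theta_2)\tau$ up to renaming. Let $(u_1,\theta_1^{\circ})\in V_1$ and $(u_2,\theta_2^{\circ})\in V_2$ be the frontier leaves that $a_1$ and $a_2$ refine. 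Because $V_1\cap V_2$ is constructor-root (and, as Example~\ref{ex:fg2} shows, the frontier is pushed deep enough that every leaf of $V_1$ is pairwise $B$-unifiable with every leaf of $V_2$), $u_1$ and $u_2$ are $B$-unifiable via some $\hat\sigma$, the unifier $(\theta_1^{\circ}\cup\theta_2^{\circ})\hat\sigma$ is constructor-root, and it lies in $\csuV{t=t'}{\cap,V_1,V_2}{E\cup B}$. Applying Lemma~\ref{lem-new} to the refinement carrying $u_1$ to $a_1$, and then its symmetric counterpart to the refinement carrying $u_2$ to $a_2$, yields $\restrict{((\theta_1^{\circ}\cup\theta_2^{\circ})\hat\sigma)}{W_\cup}\sqsupseteq_{E\cup B}\restrict{((\theta_1\cup\theta_2)\tau)}{W_\cup}=\restrict{\theta}{W_\cup}\sqsupseteq_{E\cup B}\restrict{\rho}{W_\cup}$; since every member of $\csuV{t=t'}{\cap,V_1,V_2}{E\cup B}$ is by construction an $E\cup B$-unifier of $t=t'$, this set is complete.

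The main obstacle is precisely this last manipulation. Lemmas~\ref{lem-new} and~\ref{lem-new-neg} are stated for a \emph{single} refinement on a \emph{single} unificand, whereas $a_1$ may be an arbitrarily deep refinement of $u_1$ and both unificands are refined simultaneously. I expect to handle this by (i) isolating an auxiliary claim that the constructor-root property of a variant unifier is inherited by the subsumed unifier that Lemma~\ref{lem-new} produces after one refinement step, so that the lemma can be iterated along a narrowing derivation and then re-applied on the other side; and (ii) making precise the ``complete frontier'' property of $\mathbb{P}(\sem{t})$ and the ``all leaves pairwise $B$-unifiable or all pairs constructor-root failure pairs'' shape of the intersections appearing in Definition~\ref{def:mgvu-new} that Example~\ref{ex:fg2} is pointing at (this also reconciles the priority of the ``failure'' clause over the ``constructor-root'' clause, since when no leaf pair $B$-unifies both return $\emptyset$). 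Everything else---the finiteness claims and the ``otherwise'' case---is a routine appeal to the FVP and Corollary~\ref{cor:finitary-unification}.
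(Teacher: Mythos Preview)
Your approach is essentially the same as the paper's: both reduce completeness to showing that the constructor-root unifiers at the chosen frontier subsume every unifier arising deeper in the narrowing trees, handling the failure clause via Lemma~\ref{lem-new-neg} and the constructor-root clause via (the content of) Lemma~\ref{lem-new}. The paper frames this as a proof by contradiction rather than a direct case split, but the skeleton---pick the offending $\rho'$ from $\csuV{t=t'}{\cap}{E\cup B}$, locate the frontier leaves $(v_i,\tau_i)\in V_i$ it refines, and conclude the subsumption---is identical.

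One point worth noting: the obstacle you honestly flag (Lemma~\ref{lem-new} is stated for a single one-sided refinement, whereas the argument needs an iterated, two-sided version) is not addressed in the paper's proof either. The paper simply asserts the final inequality $\restrict{((\tau_1\cup\tau_2)\tau_3)}{W_\cup} \sqsupseteq_{E\cup B} \restrict{((\theta_1\cup\theta_2)\sigma)}{W_\cup}$ as ``the conclusion follows,'' without working through the iteration or the inheritance of the constructor-root property that you propose to isolate. So your plan is, if anything, more explicit than the original about what remains to be checked.
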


\begin{proof}
By contradiction.
Let us assume that $\csuV{t = t'}{\overline{\cap}}{E\cup B}$ is not a complete set of unifiers of $t$ and $t'$.
That is, 
there exists a unifier $\rho'\in\csuV{t = t'}{\cap}{E\cup B}$ and
there is no unifier $\rho\in\csuV{t = t'}{\overline{\cap}}{E\cup B}$
s.t. $\rho \sqsupseteq_{E\cup B} \rho'$.
By definition, 
there exist smallest sets $V_1\in\mathbb{P}(\sem{t})$, $V_2\in\mathbb{P}(\sem{t'})$
s.t.
$V_1\cap V_2$ is constructor-root or a failure pair.
The case of a failure pair is immediate by Lemma~\ref{lem-new-neg}.
Since $\rho'\in\csuV{t = t'}{\cap}{E\cup B}$, we have that 
there exists $u_1,u_2,\theta_1,\theta_2,\sigma$ s.t.
$\rho'=\theta_1\sigma \cup \theta_2\sigma \cup \sigma$,
 $(u_1,\theta_1) \in \sem{t}$, $(u_2,\theta_2) \in \sem{t'}$, 
$\sigma \in \csuV{u_1 = u_2}{W_\cup}{B}$,
and
$\restrict{(\theta_1\sigma)}{W_\cap} 
\congr{B} 
\restrict{(\theta_2\sigma)}{W_\cap}$.
Since $V_1\cap V_2$ is constructor-root, there must be two leaves
 $(v_1,\tau_1) \in V_1$, $(v_2,\tau_2) \in V_2$
 and a substitution $\tau_3$ s.t. 
$\tau_3 \in \csuV{v_1 = v_2}{W_\cup}{B}$,
$\restrict{(\tau_1\tau_3)}{W_\cap} 
\congr{B} 
\restrict{(\tau_2\tau_3)}{W_\cap}$,
and
$(\tau_1\cup\tau_2)\tau_3$
is \emph{constructor-root}.
Furthermore, the variant  $(u_1,\theta_1)$ (resp. $(u_2,\theta_2)$) is obtained by further narrowing of $v_1$ (resp. $v_2$),
i.e.,
$(u_1,\tau'_1)\in \sem{v_1}$ and $(u_2,\tau'_2)\in \sem{v_2}$.
But then the conclusion follows, since the statement is 
$\restrict{((\tau_1\cup\tau_2)\tau_3)}{W_\cup} \sqsupseteq_{E\cup B} \restrict{((\theta_1\cup\theta_2)\sigma)}{W_\cup}$.
\end{proof}

\section{Experimental Evaluation}\label{sec:exp}

We have performed some experiments with the constructor-root variant-based unification, which are
available at \url{http://safe-tools.dsic.upv.es/cr-mgvu}.

All the experiments were conducted on a PC with a 3.3GHz Intel Xeon E5-1660 and 64GB RAM. We created a battery of 15 different unification problems for both the {\it exclusive-or} and the {\it abelian group} theories. 
These are among the most complicated cryptographic theories in protocol analysis
that Maude-NPA~\cite{EMM09},
Tamarin~\cite{DHRS18} 
and AKISS~\cite{BDGK17} can hardly handle.
Indeed, 
the {\it exclusive-or} and the {\it abelian group} theories cannot be specified in Maude using constructor symbols
and we introduce arbitrary constructors $f_1,f_2,f_3,f_4,f_5$, where the subindex indicates the number of arguments. 
This is a common situation in crypto protocol analysis where the cryptographic properties are combined
with many different additional constructor symbols. 
Experiments using other cryptographic theories, such as Diffie-Hellmann exponentiation, or more traditional programs, 
such as manipulating complex data structures, could also have been included but were discarded 
because the improvement is less remarkable.

For each problem and theory, we computed: 
(i) the unifiers using the standard {\tt variant\;unify} command provided by the C++ core system of Maude; 
(ii) the unifiers using the algorithm $\csuV{t = t'}{\doublecap}{E\cup B}$ of \cite{ES19}; 
(iii) the unifiers using the algorithm $\csuV{t = t'}{\overline{\cap}}{E\cup B}$ of Definition~\ref{def:mgvu-new},
and
(iv) the unifiers using the algorithm 
$\csuV{t = t'}{\overline{\doublecap}}{E\cup B}$ 
obtained from
$\csuV{t = t'}{\overline{\cap}}{E\cup B}$ 
 by 
replacing $\csuV{t = t'}{\cap,V_1,V_2}{E\cup B}$ with $\csuV{t = t'}{\doublecap,V_1,V_2}{E\cup B}$ from \cite{ES19}.
Note that (ii), (iii), and (iv) are implemented at the metalevel of Maude. 
We measured both the number of computed unifiers and the time required for their computation.

Table~\ref{tab:xor} (resp. Table~\ref{tab:ag}) shows the results obtained for the {\it exclusive-or} (resp. {\it abelian group)} theory. \emph{T/O} indicates that a generous $4$ hours \emph{timeout} was reached without any response. The first column describes the unification problem, 
while the following $\#_{\mathit{maude}}$, $\#_{\mathit{fast}}$, $\#_{\mathit{cr}}$ and $\#_{\mathit{cr+fast}}$ columns show the number of computed unifiers for 
all four unification algorithms (i), (ii), (iii), (iv) described above
and the columns
$\mathcal{T}_{\mathit{maude}}$, $\mathcal{T}_{\mathit{fast}}$, $\mathcal{T}_{\mathit{cr}}$ and $\mathcal{T}_{\mathit{cr+fast}}$ 
show the time (in milliseconds) required to execute the unification command.
Note that it is unfair to compare the performance between compiled code ($\mathcal{T}_{\mathit{maude}}$ column) and interpreted code
($\mathcal{T}_{\mathit{fast}}$, $\mathcal{T}_{\mathit{cr}}$ and $\mathcal{T}_{\mathit{cr+fast}}$ columns), i.e., the C++ core system of Maude and a Maude program using Maude's metalevel. However, our constructor-root unification algorithm is able to beat the compiled code in almost all the unification problems.

\rowcolors{1}{gray!10}{white}
\begin{table}[t]
	\centering
	\scriptsize
	{\setlength{\tabcolsep}{0.5em}
	\begin{tabular*}{\textwidth}{|c|>{\centering}p{7.5cm}|r|r|r|r|r|r|r|r|}
		\hhline{-|-|-|-|-|-|-|-|-|-|}
		\rowcolor{gray!50}
		\multicolumn{2}{|c|}{\it Unification problem}
		&\multicolumn{1}{c|}{$\#_{\mathit{maude}}$}
		&\multicolumn{1}{c|}{$\#_{\mathit{fast}}$}
		&\multicolumn{1}{c|}{$\#_{\mathit{cr}}$}
		&\multicolumn{1}{c|}{$\#_{\mathit{cr+fast}}$}
		&\multicolumn{1}{c|}{$\mathcal{T}_{\mathit{maude}}$} 
		&\multicolumn{1}{c|}{$\mathcal{T}_{\mathit{fast}}$} 
		&\multicolumn{1}{c|}{$\mathcal{T}_{\mathit{cr}}$} 
		&\multicolumn{1}{c|}{$\mathcal{T}_{\mathit{cr+fast}}$}\\
		\hhline{-|-|-|-|-|-|-|-|-|-|}
        $P_{1}$ & $ V_1 \? V_2 * V_3 * V_4$                                         &57     &1      &1   &1      &50      &95     &1    &1          \\
        $P_{2}$ & $ V_1 \? f_3(V_2 * V_3,f_1(V_3 * V_4),f_2(V_2,f_1(V_4)))$         &61     &1      &1   &1      &172     &243     &2    &2          \\
        $P_{3}$ & $ V_1 * V_2 \? V_3 * V_4$                                         &57     &8      &41   &8     &9       &89       &83   &131        \\
        $P_{4}$ & $ V_1 * V_2 \? f_2(V_3,f_1(V_4 * V_5))$                     &28    &4      &4   &4      &12     &18     &8 &10       \\
        $P_{5}$ & $ f_1(a) * f_1(V_1 * V_2) \? f_1(b * V_3) * f_1(c * V_4)$         &74     &54     &74   &54    &53      &112       &161  &268        \\
		\hhline{-|-|-|-|-|-|-|-|-|-|}
        $P_{6}$ & $ f_1(V_1) \? f_1(V_2 * V_3 * f_2(V_4,V_5))$                      &21     &1      &1   &1      &4 &17       &1    &1          \\
        $P_{7}$ & $ f_2(V_1,V_2 * V_3 * V_4) \? f_2(V_5 * f_1(V_6 * V_7),V_8)$      &1596   &1      &1   &1      &3473   &41592    &9    &9          \\
        $P_{8}$ & $ f_3(V_1,V_2,V_3) \? f_3(f_1(V_4 * V_5),f_1(V_6 * V_7 * V_8),f_1(f_1(V_9)))$    &399    &1      &1   &1      &507     &3289   &8    &8  \\
        $P_{9}$ & $ f_4(V_1,V_2 * V_3,f_1(V_2 * V_4 * V_5),V_3) \?                                 
                     f_4(f_2(V_6,V_7)* V_6,V_8,V_9,f_1(f_1(V_{10})))$               &492    &14    &1   &1      &122544  &61184      &14   &14         \\
        $P_{10}$ & $ f_5(V_1,V_2 * V_3 * V_4,f_2(V_5,f_1(V_3 * V_4)),                              
                     V_4,f_1(V_6 * V_7)) \? f_5(f_2(V_8,V_{9}),V_{10},V_{11},                     
                     f_1(f_1(V_8)),V_{12})$                                         &161    &11      &1   &1      &6780   &9249   &16   &16      \\
		\hhline{-|-|-|-|-|-|-|-|-|-|}
        $P_{11}$ & $ f_1(V_1 * V_2) \? f_2(V_3 * V_4 * V_5,f_2(V_4,V_5))$                      &0     &0      &0   &0      &985       &125       &1    &1          \\
        $P_{12}$ & $ f_2(V_1,V_2 * V_3 * V_4) \? f_3(V_5 * f_1(V_6 * V_7),V_8,V_9)$      &0     &0      &0   &0      &2987   &57    &1    &1          \\
        $P_{13}$ & $ f_3(V_1,V_2,V_3 * V_4) \? f_2(f_1(V_5 * V_6 * V_7),f_1(f_1(V_8)))$    &0     &0      &0   &0      &468     &48   &1    &1  \\
        $P_{14}$ & $ f_4(V_1,V_2 * V_3,f_1(V_2 * V_4 * V_5),V_3) \?                                 
                     f_3(f_2(V_6,V_7)* V_6,V_8,f_1(f_1(V_{9})))$               &0     &0      &0   &0      &118028  &53653      &1   &1         \\
        $P_{15}$ & $ f_5(V_1,V_2 * V_3 * V_4,f_2(V_5,f_1(V_3 * V_4)),                              
                     V_6,f_1(V_7 * V_8)) \? f_4(f_2(V_9,V_{10}),V_{11},                     
                     f_1(f_1(V_9)),V_{12})$                                         &0     &0      &0   &0      &6968   &7033   &1   &1      \\
        \hhline{-|-|-|-|-|-|-|-|-|-|}
	\end{tabular*}
	}
\caption{Experimental evaluation (exclusive-or)}\label{tab:xor}
\vspace{-2ex}
\end{table}

\rowcolors{1}{gray!10}{white}
\begin{table}[t]
	\centering
	\scriptsize
	{\setlength{\tabcolsep}{0.5em}
	\begin{tabular*}{\textwidth}{|c|>{\centering}p{6.4cm}|r|r|r|r|r|r|r|r|r|}
		\hhline{-|-|-|-|-|-|-|-|-|-|}
		\rowcolor{gray!50}
		\multicolumn{2}{|c|}{\it Unification problem}
		&\multicolumn{1}{c|}{$\#_{\mathit{maude}}$}
		&\multicolumn{1}{c|}{$\#_{\mathit{fast}}$}
		&\multicolumn{1}{c|}{$\#_{\mathit{cr}}$}
		&\multicolumn{1}{c|}{$\#_{\mathit{cr+fast}}$}
		&\multicolumn{1}{c|}{$\mathcal{T}_{\mathit{maude}}$} 
		&\multicolumn{1}{c|}{$\mathcal{T}_{\mathit{fast}}$} 
		&\multicolumn{1}{c|}{$\mathcal{T}_{\mathit{cr}}$} 
		&\multicolumn{1}{c|}{$\mathcal{T}_{\mathit{cr+fast}}$}\\
		\hhline{-|-|-|-|-|-|-|-|-|-|}
        $P_{16}$ & $ V_1 \? V_2 + V_3 + V_4$                                         &3702     &1      &1   &1      &4344602      &5034046     &1    &1          \\
        $P_{17}$ & $ V_1 \? f_3(V_2 + V_3,f_1(V_3 + V_4),f_2(V_2,f_1(V_4)))$         &3789     &1      &1   &1      &6956340     &5413107     &2    &2          \\
        $P_{18}$ & $ V_1 + V_2 \? V_3 + V_4$                                         &3611     &664      &3313   &664     &36258       &547115       &253078   &657746        \\
        $P_{19}$ & $ V_1 + V_2 \? f_2(V_3,f_1(V_4 + V_5))$                     &376    &8      &52   &8      &26425     &5083 &366 &2000       \\
        $P_{20}$ & $ f_1(a) + f_1(V_1 + V_2) \? f_1(b + V_3) + f_1(c + V_4)$         &316     &193     &316   &193    &10202      &4175       &3161  &6976        \\
		\hhline{-|-|-|-|-|-|-|-|-|-|}
        $P_{21}$ & $ f_1(V_1) \? f_1(V_2 + V_3 + f_2(V_4,V_5))$                      &158     &1      &1   &1      &426       &1410       &2    &2          \\
        $P_{22}$ & $ f_2(V_1,V_2 + V_3 + V_4) \? f_2(V_5 + f_1(V_6 + V_7),V_8)$      &-   &-      &1   &1      &T/O   &T/O    &11    &11          \\
        $P_{23}$ & $ f_3(V_1,V_2,V_3) \? f_3(f_1(V_4 + V_5),f_1(V_6 + V_7 + V_8),f_1(f_1(V_9)))$    &-    &-      &1   &1      &T/O     &T/O   &11    &13  \\
        $P_{24}$ & $ f_4(V_1,V_2 + V_3,f_1(V_2 + V_4 + V_5),V_3) \?                                 
                     f_4(f_2(V_6,V_7)* V_6,V_8,V_9,f_1(f_1(V_{10})))$               &-    &-    &1   &1      &T/O  &T/O      &19   &19         \\
        $P_{25}$ & $ f_5(V_1,V_2 + V_3 + V_4,f_2(V_5,f_1(V_3 + V_4)),                              
                     V_4,f_1(V_6 + V_7)) \? f_5(f_2(V_8,V_{9}),V_{10},V_{11},                     
                     f_1(f_1(V_8)),V_{12})$                                         &-    &-      &1   &1      &T/O   &T/O   &24   &24      \\
		\hhline{-|-|-|-|-|-|-|-|-|-|}
        $P_{26}$ & $ f_1(V_1 + V_2) \? f_2(V_3 + V_4 + V_5,f_2(V_4,V_5))$                      &-     &0      &0   &0      &T/O         &5594580       &1    &1          \\
        $P_{27}$ & $ f_2(V_1,V_2 + V_3 + V_4) \? f_3(V_5 + f_1(V_6 + V_7),V_8,V_9)$      &-     &0      &0   &0      &T/O     &4399334    &1    &1          \\
        $P_{28}$ & $ f_3(V_1,V_2,V_3 + V_4) \? f_2(f_1(V_5 + V_6 + V_7),f_1(f_1(V_8)))$    &-     &0      &0   &0      &T/O       &3757585   &1    &1  \\
        $P_{29}$ & $ f_4(V_1,V_2 + V_3,f_1(V_2 + V_4 + V_5),V_3) \?                                 
                     f_3(f_2(V_6,V_7)* V_6,V_8,f_1(f_1(V_{9})))$               &-     &-      &0   &0      &T/O  &T/O      &1   &1         \\
        $P_{30}$ & $ f_5(V_1,V_2 + V_3 + V_4,f_2(V_5,f_1(V_3 + V_4)),                              
                     V_6,f_1(V_7 + V_8)) \? f_4(f_2(V_9,V_{10}),V_{11},                     
                     f_1(f_1(V_9)),V_{12})$                                         &-     &-      &0   &0      &T/O   &T/O   &1   &1      \\
		\hhline{-|-|-|-|-|-|-|-|-|-|}
	\end{tabular*}
	}
\caption{Experimental evaluation (abelian group)}\label{tab:ag}
\vspace{-2ex}
\end{table}

Tables~\ref{tab:xor} and \ref{tab:ag} show that the \textit{cr+fast} combination is the best choice, since it combines the benefits of both the \textit{fast} unification algorithm of \cite{ES19}
and the new constructor-root unification algorithm \textit{cr}. 
For the number of unifiers, \textit{cr} always reported less unifiers than Maude except for problem $P_{20}$, where both report the same number.
However, both the \textit{cr} and the \textit{fast} algorithm are incomparable and \textit{cr} reported less unifiers than \textit{fast} in the unification problems $P_9$ and $P_{10}$, whereas \textit{fast} reported less unifiers than \textit{cr} in the unification problems $P_3, P_5, P_{18}, P_{19}, P_{20}$. 
As for the execution time, \textit{cr} can beat both Maude and the \textit{fast} algorithm for almost all the unification problems.
Indeed, unification in the abelian group is so complex that neither Maude nor \textit{fast} can terminate in most of the unification problems (e.g., $P_{22}, P_{23}, P_{24}, P_{25}$, and more), whereas \textit{cr} did.

Our best contribution are the non-unifiable problems in the third block of Tables~\ref{tab:xor} and \ref{tab:ag}.
Our new constructor-root unification algorithm immediately terminates, whereas neither Maude nor \textit{fast} could, as shown in the unification problems
$P_{11}, P_{12}, P_{13}, P_{14}, P_{15}, P_{26}, P_{27}, P_{28}, P_{29}, P_{30}$.

\section{Conclusion and Future Work}\label{sec:conc}

The variant-based equational unification algorithm implemented in the most recent version of Maude, version 3.0, 
may compute many more unifiers than the necessary or may not be able to stop immediately.
Constructor symbols are extensively used in computer science,
but they have not been integrated into
the variant-based equational unification procedure of Maude.
In this paper, we have redefined the variant-based unification algorithm 
and our experiments on some unification problems show an impressive speedup.
Especially for non-unifiable problems, where many resources are wasted. 

As far as we know, this is the only research line to reduce the number of variant unifiers. The closest work is to combine standard unification algorithms with variant-based unification, such as \cite{EKMN+15,EEMR19}. 
Note that the constructor variant unification of \cite{Meseguer18-scp} is not connected to our work,
since it is based on a new notion of \emph{constructor variant}.
This is just a step forward on developing new techniques for improving variant-based unification and we plan to reduce even more the number of variant unifiers.

\bibliographystyle{eptcs}

\end{document}